\newtheorem{theorem}{Theorem}
\begin{document}
\title{Information Bottleneck Methods for Distributed Learning} 
\author{\IEEEauthorblockN{Parinaz Farajiparvar\IEEEauthorrefmark{1}, Ahmad Beirami\IEEEauthorrefmark{2}, and Matthew Nokleby\IEEEauthorrefmark{1}}
\IEEEauthorblockA{\IEEEauthorrefmark{1}Department of Electrical and Computer Engineering, Wayne State University, Detroit, MI\\
\IEEEauthorrefmark{2}Research Laboratory of Electronics, MIT, Cambridge, MA \\
Email: \{parinaz.farajiparvar, matthew.nokleby\}@wayne.edu, berami@mit.edu}}
\maketitle
\begin{abstract}
We study a distributed learning problem in which Alice sends a compressed distillation of a set of training data to Bob, who uses the distilled version to best solve an associated learning problem. We formalize this as a rate-distortion problem in which the training set is the source and Bob's {\em cross-entropy loss} is the distortion measure. We consider this problem for unsupervised learning for batch and sequential data. In the batch data, this problem is equivalent to the {\em information bottleneck} (IB), and we show that reduced-complexity versions of standard IB methods solve the associated rate-distortion problem. For the streaming data, we present a new algorithm, which may be of independent interest, that solves the rate-distortion problem for Gaussian sources. Furthermore, to improve the results of the iterative algorithm for sequential data we introduce a two-pass version of this algorithm. Finally, we show the dependency of the rate on the number of samples $k$ required for Gaussian sources to ensure cross-entropy loss that scales optimally with the growth of the training set.
\end{abstract}

\begin{IEEEkeywords}Machine Learning, Rate-distortion function, Information Bottleneck, Distributed Learning, Streaming Data.\end{IEEEkeywords}

\section{Introduction}
We consider a distributed learning problem in which Alice obtains a training sequence of $k$ i.i.d. samples drawn from a distribution that belongs to a parametric family. Alice wants to distill the training set to a set of features $T$, which she communicates to Bob using no more than $R$ bits. Bob uses $T$ to estimate the data distribution associated with the learning problem. We measure the quality of Bob's distribution according to the {\em cross entropy} loss, which is ubiquitous in machine learning \cite{deng2006cross} and closely related to the KL divergence between the learned and true distributions.

This setting induces a rate-distortion problem: For a given bit budget $R$, what is the encoding $T$ of Alice's training set $X^k$ that minimizes Bob's cross-entropy loss. We consider this problem for both batch and sequential data, and we show that the ideal strategy is to solve a version of the information bottleneck (IB) problem for an appropriate sufficient statistic of $X^k$ \cite{tishby2000information}.

In this work, the associated rate-distortion problem is equivalent to a special case of the information bottleneck \cite{tishby2000information}, in which one wishes to find a compressed representation, $T$, of an observed random variable $X$ that is maximally ``relevant'' to a correlated random variable $Y$, as measured by $I(Y;T)$. IB has been applied to clustering and feature extraction \cite{slonim2000document, slonim2005information}, and Tishby recently proposed an explanation for the success of deep learning in terms of IB \cite{tishby2015deep}. Extensions of IB to distributed \cite{aguerri2017distributed}, interactive \cite{vera2016collaborative}, and multi-layer \cite{yang2017multi} multi-terminal settings have recently been considered. Furthermore, IB was shown to solve distributed learning problems with {\em privacy} constraints \cite{moraffah2017privacy}.

In Section \ref{sect:batch_results}, we show that tailored implementations of IB algorithms proposed in \cite{tishby2000information,chechik2005information} can be used to compute the rate-distortion for discrete and Gaussian data sources. This implementations exploit the fact that using the sufficient statistic reduces the computational/storage complexity of the IB algorithm from exponential in $k$ to polynomial in $k$ and from $kd$-dimensional matrix to $d$-dimensional matrix in discrete and Gaussian sources, respectively. Furthermore, we consider the relationship between the number of samples $k$ and the required compression rate $R$. Indeed, in the {\em data-limited regime} in which $k$ is small, a higher $R$ does not impact the distortion significantly.

In Section \ref{sec:sequential_results}, we consider the encoding of sequential data, where the figure of merit is the total cross-entropy regret. Explicit minimization of the regret turns out to be challenging, so we propose a ``greedy'' on-line method which gives a tractable approach to encoding Gaussian data. In this method, the agent chooses the encoding considers the cross-entropy regret only at the current time instance, and it is provably suboptimum. To improve the regret performance, we also propose a ``two-pass'' solution which includes a backwards pass in which the feature encoding is improved by considering the impact on future regret.

Finally, in Section \ref{sect:conclusion} we draw conclusions and suggest areas for future work.

\section{Problem Statement}\label{sect:problem.setting}
We consider the unsupervised learning problem for both {\em batch} and {\em sequential data}, in which data is distributed according to $p(x|\theta)$ and the objective is to minimize the cross entropy of the learned distribution with $p(x|t)$.

\subsection{Batch data}
Let $(X,\theta)\in(\mathcal{X},\Theta)$ be (discrete or continuous) random variables with joint distribution $p(x|\theta)p(\theta)$. The conditional distribution $p(x | \theta)$ represents a parametric family of distributions on $X$, and $p(\theta)$ represents a (known) prior over the family. Alice does not observe $\theta$ directly, but instead observes a set of $k$ i.i.d. samples $X^k := (X_1,\dots,X_k)$, with $X_i \sim p(x|\theta)$. Alice constructs a distilled representation $T$ of this training set and transmits it to Bob (Figure~\ref{fig:batch_model}). Bob uses $T$ to construct the distribution $p(x|T)$, which approximates both $p(x|X^k)$---the best distribution that can be learned from $X^k$---and $p(x|\theta)$---the true distribution. This gives rise to the Markov chain $X-\theta-X^k-T$, where we emphasize that $X^k$ is the training set, and $X$ is a hypothetical test point conditionally independent of $X^k$ given $\theta$.
\begin{figure}[htb]
\centering
  \includegraphics[width=7cm]{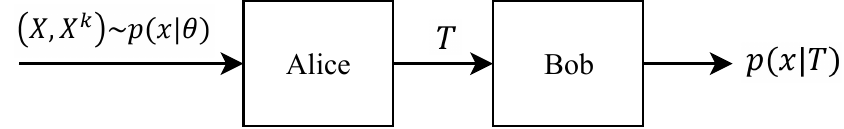}
\caption {Batch data transmission model.}
\label{fig:batch_model}
\end{figure}
Here, we suppose that an encoder has direct access to $\theta$ and wishes to construct a compact representation $T$ to be stored or transmitted to another agent. The representation $T$ is used to construct an approximation $p(x|T)$ of the parametric distribution, which can be used to predict hypothetical test points $X \sim p(x|\theta)$. This gives rise to the Markov chain $X-\theta-T$.

Given a budget of $R$ bits, Alice's objective is to choose $T$ to minimize the {\em cross-entropy loss}, defined as
\begin{equation*}
	H(p(x | \theta) || p(x|T)) = -\mathbb{E}_{p(x | \theta)}[\log p(x|T)].
\end{equation*}
The cross-entropy loss is ubiquitous in machine learning; common practice in deep learning, for example, is to choose model parameters that minimize the empirical cross entropy over the training set \cite{Goodfellow.etal.2016}. The cross-entropy differs from the KL divergence by a constant, and thus measures the distance between the true distribution and $p(x|T)$.\par

Given a stochastic mapping $p(t|x^k)$, the expectation over the cross-entropy loss is $E_{\theta,X^k,T}[H(p(x | \theta) || p(x|T))] = H(X|T)$.\footnote{Although the random variables considered here need not be discrete, in general we use capital $H$ to denote the standard entropy, with the understanding that the differential entropy $h(\cdot)$ is intended when random variables are continuous.} Regarding $I(X^K;T)$ as the average number of bits
required to describe $T$, we define the distortion-rate function as the minimum average cross entropy loss that we achieve when the bit budget is less than $R$:
\begin{equation*}
	D^{k}_\text{B}(R) := \min_{p(t | x^k): I(X^k;T) \leq R} H(X|T),
\end{equation*}
for $H(X|\theta) \leq H(X|X^k) \leq H(X|T) \leq H(X)$ being the range of possible distortion values.
Because $I(X;T) = H(X) - H(X|T)$, finding the distortion-rate function is equivalent to solving the information bottleneck for the Markov chain $X-\theta-T$, i.e. minimizing the mutual information $I(X^k;T)$ subject to a constraint on $I(X;T)$. Indeed, the simple prediction problem can be solved using existing IB techniques for discrete \cite{tishby2000information} or Gaussian \cite{chechik2005information} sources.
\begin{equation*}
  \min_{p(t|x^k)} \mathcal{L}=I(X^k;T)-\beta I(X;T).
\end{equation*} 
However, the dependence of $X$ and $X^k$ through $\theta$ introduces a structure that one can exploit in computing $D^{k}_\text{B}(R)$ and finding the optimum $p(t | x^k)$. For example, a straightforward use of the iterative IB algorithm from \cite{tishby2000information} requires iteration over all $|\mathcal{X}|^k$ possible training sets, which is unmanageable in practice; in Section \ref{sect:batch_results} we show how to reduce the computational and storage burden. Further, when $X$ and $\theta$ are jointly Gaussian, we specialize the results from \cite{chechik2005information} to derive a simple, closed-form expression for $D^{k}_\text{B}(R)$.

The trade-off described by $D^{k}_\text{B}(R)$ improves with larger $k$. The rate-distortion curve may be poor for small $k$, while $\lim_{k \to \infty} D^{k}_\text{B}(R)$ approaches the special case in which Alice has direct access to $\theta$. If $k$ is small, there may be little point in using many bits to describe $X^k$. 

In figure \ref{fig:R-Dcurve} we show upper and lower bounds on the distortion and rate. Also, we provide upper and lower bounds on $R_\text{B}^{k}(D)$
	\begin{align*}
		H(X) - D \leq  \ R_\text{B}^{k}(D) \leq H(\theta)+\log|\mathcal{X}|-D 
\end{align*}
\begin{figure}[htb]
\centering
  \includegraphics[width=7cm]{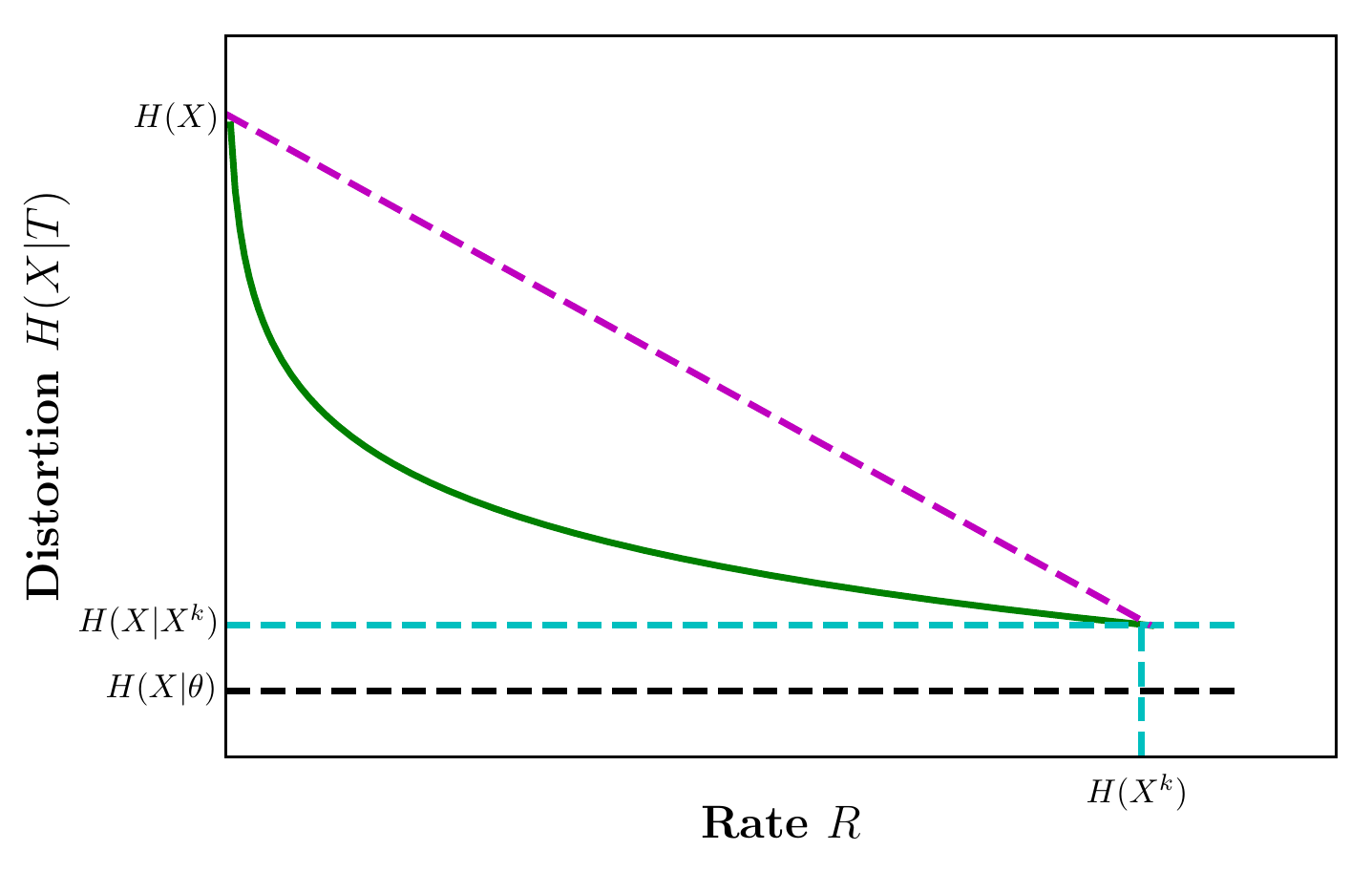}
\caption {Rate-Distortion curve, which shows the outer bounds on the rate and distortion.}
\label{fig:R-Dcurve}
\end{figure}

\subsection{Sequential data}
Next, we consider {\em sequential data}, where again $(X,\theta)\in(\mathcal{X},\Theta)$ be (continuous) random variables with joint distribution $p(x|\theta)p(\theta)$. In this case, instead of observing a set of $k$ i.i.d. samples, Alice observes the data one-by-one in each round of the sequential data transmission, where samples are drawn from $p(x|\theta)$. After each round $l$, Alice constructs a distilled representation $T_l$ of the training set that she observes up to $l$-th round where $1\leq l\leq k$ and transmits it to Bob (Figure~\ref{fig:stream_model}). Then, Bob uses $T^l$ to construct the distribution $p(x|T^l)$, where $T^l := \{T_1,T_2,\dots,T_l\}$.
\begin{figure}[htb]
\centering
  \includegraphics[width=7.5cm]{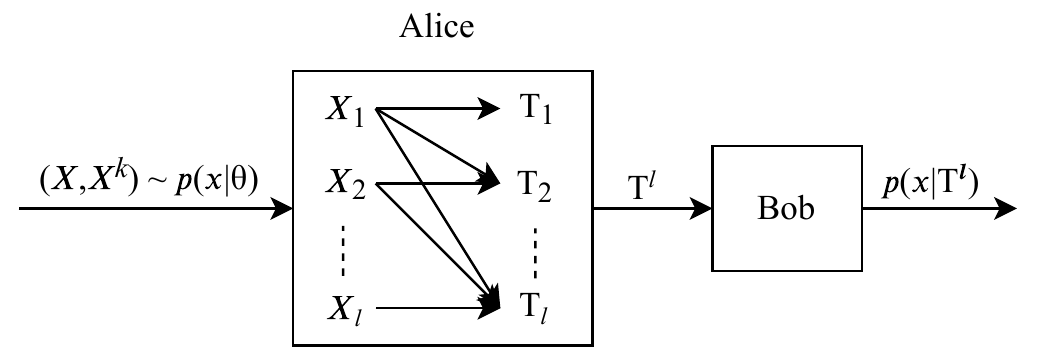}
\caption {Sequential data transmission model.}
\label{fig:stream_model}
\end{figure}
In this set up, the Markov chain for $k$ i.i.d. samples is $X-\theta-X^l-T^k$ and the cross entropy loss at time $l$ is defined as
\begin{equation*}
	H(p(x | \theta) || p(x|T^l)) = -\mathbb{E}_{p(x | \theta)}[\log p(x|T^l)].
\end{equation*}

We introduce three approaches to solve this problem:
\subsubsection{Comprehensive solution}
In this solution, we consider the global design of the features $T_l$ while ensuring that $T_l$ respects causality. In this case, $I(X^l;T_l|T^{-l})$ represents the number of bits required to describe $T^l$ given that the features $T^{-l}$ have already been sent. We take the overall distortion to be the sum regret, or the suffered cross-entropy loss summed over all $k$ samples. Then, distortion-rate function is defined as the minimum distortion that we achieve when the bit budget of each round $1\leq l\leq k$ is less than $R_l$:
\begin{equation*}
\begin{split}
	D^{k}_\text{SC}(R) := \min_{p(t_1|x^1),\dots,p(t_l|x^l):I(X^l;T_l|T^{-l})\leq R_l} \sum_{l=1}^k H(X|T^l).
\end{split}
\end{equation*}
Based on $I(X;T^l)=H(X)-H(X|T^l)$, we can translate the distortion-rate function to the following variational problem: 
\begin{equation*}
\begin{split}
\max_{{p(t_1|x^1),\dots,p(t_l|x^l)}}\!\!\!\!\!\!\!\!\!\!\!\mathcal{L} = &\sum_{l=1}^k I(X;T^l)-\alpha_l I(X^l;T_l|T^{-l})
\end{split}
\end{equation*}
where $T^{-l}:=\{T_1,T_2,\dots,T_{l-1}\}$, $I(X^l;T_l|T^{-l})$ determines the rate in each round, and $\alpha_1,\dots,\alpha_k$ shows the trade-off between the average distortion and the rate of each round. Unfortunately, this problem results in a challenging joint optimization problem over the features $T_l$, and even for Gaussian data, finding a closed-form solution is challenging.

\subsubsection{Online solution}

To find a tractable solution, we present an on-line approach. Instead of find the global solution to the encodings $T_l$, we optimize each term in the objective function one-by-one, without regard for future terms. Therefore, in the $l$th round, we have the distortion-rate function

\begin{equation}
	D^{l}_\text{SO}(R) := \min_{p(t_l | x^l):I(T_l;X^l|T^{-l})\leq R_l} H(X|T^l),
\end{equation}
where the stochastic mapping $p(t_l|x^l)$ gives a ``soft'' description of $T_l$ in each round.\par
In this setup, we define $I(X;T_l|T_{-l})=H(X|T_{-l})-H(X|T^l)$ as the average distortion. This translates the distortion-rate problem to the following minimization problem:
\begin{equation*}
\min_{p(t_l|x^l)}\mathcal{L}=I(X^l;T_l|T^{-l})-\beta_l I(X;T_l|T^{-l}),
\end{equation*} 
where $T_l$ denotes the compressed representation of an observed random variable $X^l$. This formulation is not equivalent to the IB. 

Instead, the equivalent problem is to minimizing the mutual information $I(X^l;T_l|T_{-l})$ given a constraint on the {\em conditional} mutual information $I(X;T_l|T_{-l})$. Consequently, the standard IB algorithms can not be applied here. In Section \ref{sec:sequential_results} we develop a new iterative algorithm for computing $D_\text{SO}^{l}(R)$ based on the sufficient statistic of the Gaussian distribution. 
\subsubsection{Two-path Solution}
In the on-line algorithm presented above, the encoding $T_l$ is chosen supposing that the encoding function for previous features is fixed, and without regard for future features. This results in a strictly suboptimum solution. To improve the solution, we develop a two-pass solution, which adds a backwards pass to the algorithm, taking the future encoding designs as fixed and without regard for {\em previous} feature encodings. After carrying out the on-line algorithm above, we optimize the following loss function for the backward path:
\begin{align*}
\min_{p(t_{l-1}|x^{l-1})}\mathcal{L}\!=I(\overline{X}_{l-1};&T_{l-1}|T^{-(l-1)}\!,T_l)\\
&-\beta I(X;T\!_{l-1}|T^{-(l-1)},T_l)
\end{align*} 
Similar to the online solution, the backward loss function is equivalent to the conditional IB, and we can not obtain the results by standard IB algorithm. As a result, we derive an algorithm that solves the backward path.

\section{Batch Data}\label{sect:batch_results}
In this section, we consider the {\em batch data}, in which $k$ i.i.d. samples are observed at one time. In this setting, we analyze both discrete and continuous distributions, in terms of the fundamental limits and algorithmic method. 
\subsection{Fundamental Limits}
To find $D^{k}_{\text{B}}(R)$, we leverage the equivalence between this distortion-rate problem and the information bottleneck over the Markov chain $X-\theta-X^k-T$. Considering $I(X;T)=H(X)-H(X|T)$ and solving the distortion-rate problem by Lagrange multiplier translates the distortion-rate problem to finding the conditional distribution $p(t|x^k)$ that solves the problem
\begin{equation}\label{eqn:unsupervised.ib}
  \min_{p(t|x^k)} \mathcal{L}=I(X^k;T)-\beta I(X;T),
\end{equation} 
where $I(X^k;T)$ determines the rate, $I(X;T) = H(X) - H(X|T)$ determines the average distortion, and $\beta$ determines the trade-off between the two and dictates which point on the rate-distortion curve the solution will achieve. For {\em discrete} sources, one can use the iterative method proposed in \cite{tishby2000information} to solve (\ref{eqn:unsupervised.ib}). This method only guarantees a {\em local} optimum, but it performs well in practice. When $\theta$ and $X$ are jointly Gaussian, one can use the results in \cite{chechik2005information} for the {\em Gaussian} information bottleneck, in which the optimum $p(t|x^k)$ is Gaussian and given by a noisy linear compression of the source.
\subsubsection*{Discrete Source}
For $k$-sample with a discrete source, when $X$ and $X^k$ are i.i.d. samples drawn from $p(x|\theta)$, the histogram $H_k$ of $X^k$ is a sufficient statistic for $\theta$, and similar to \cite[Theorem.1]{tishby2000information} the optimum mapping $p(t|x^k)$ that minimize (\ref{eqn:unsupervised.ib}) is computed as:
\begin{equation*}
  q(t|H_k) = \frac{q(t)}{Z(H_k,\beta)}\exp\left[-\beta D_{KL}\left[p(x|H_k)||p(x|t)\right]\right],
\end{equation*} 
where $Z(H_k,\beta)$ is the normalization function and $\beta$ is the Lagrangian multiplier in equation \ref{eqn:unsupervised.ib}.

\subsubsection*{Gaussian Source}
When $X$ and $\theta$ are jointly (multivariate) Gaussian, one can appeal to the Gaussian information bottleneck, \cite{chechik2005information}, where it is shown that the optimum $T$ is a noisy linear projection of the source, which one can find iteratively or in closed form and will be described in the sequel. If $X$ is a $d$-dimensional multivariate Gaussian, however, naive application of this approach requires one to find an $dk \times dk$ projection matrix. Here again, exploitation of the structure of this problem simplifies the result.\par
Without loss of generality, let $p(x|\theta) = \mathcal{N}(\theta,\Sigma_x)$ and $p(\theta) = \mathcal{N}(0,\Sigma_\theta)$, where $\Sigma_x \in \mathbb{R}^{d \times d}$ is the covariance of the data, and $\Sigma_\theta$ is the covariance of the prior. For $k$-sample training set $X^k$ with a Gaussian source, the sample mean, $\overline{X}_k=\frac{1}{k}\sum_{i=1}^k x_i$ is a sufficient statistic for $\theta$ and we have the Markov chain $X - \theta - \overline{X}_k - T$, where $X$ is a hypothetical test point conditionaly independent of $X^k$ given $\theta$. \cite{chechik2005information} shows that the IB-optimum compression is $T = A\overline{X}_k + Z$, where $Z$ is white Gaussian noise and  $A$ is a matrix whose rows are scaled left eigenvectors of the matrix
\begin{equation*}
  M = \frac{\Sigma_x+\Sigma_{\theta}}{k}+\frac{k-1}{k}\Sigma_{\theta} - \Sigma_\theta (\Sigma_{x} + \Sigma_\theta)^{-1}\Sigma_\theta.
\end{equation*}
Specifically, let $\lambda_1,\lambda_2,\dots$ be the ascending eigenvalues of $M_k$ and $v_1,v_2$ be the associated left eigenvectors. For $\beta>1$, only the eigenvectors $v_i$ such that $\beta_i := (1-\lambda_i)^{-1} < \beta$ are incorporated into $A$, and the resulting rate-distortion point is given parametrically by
\begin{align*}
  R(\beta) &= \frac{1}{2}\sum_{i=1}^{n(\beta)} \log\left((\beta-1)\frac{1-\lambda_i}{\lambda_i} \right) \\
  D(\beta) &= \frac{1}{2}\sum_{i=1}^{n(\beta)} \log\left(\lambda_i \frac{\beta}{\beta - 1}\right) + H(X),
\end{align*}
where $n(\beta)$ is the number of eigenvalues satisfying $\beta_i < \beta$. For $\beta \approx 1$, the rate is small and the distortion is close to the maximum value $H(X)$; as $\beta \to \infty$ the rate becomes large and the distortion converges to $H(X|X^k)$.

The problem setting implies structure on the compression matrix $A$ beyond what is obvious from above.
\subsection{Algorithm}
\subsubsection*{Discrete Source}
Again, for $k$-sample with a discrete source, when $X$ and $X^k$ are i.i.d. samples drawn from $p(x|\theta)$, the histogram $H_k$ of $X^k$ is a sufficient statistic for $\theta$, and we compute $p(x^k)$, $p(x|X^k)$ and $p(x,x^k)$ in terms of the histogram. Then, the iterative IB algorithm proposed in \cite{tishby2000information}, can be rewritten
\begin{align*}
  q^{(n)}(t|H_k) &= \frac{q^{(n)}(t)}{Z^{(n)}(H_k,\beta)}\exp\left[-\beta D_{KL}\left[p(x|H_k)||p(x|t)\right]\right]\\
  q^{(n+1)}(t) &= \sum_{\theta}q^{(n)}(t|H_k)p(H_k)\\
  q^{(n+1)}(x|t) &= \frac{1}{q^{(n)}(t)}\sum_{\theta}q^{(n)}(t|H_k)p(H_k,x),
\end{align*} 

where $n$ is the iteration index, $q^{(n)}(t|x^k)$ is the choice for $p(t|x^k)$ at iteration $n$, and the other iterated distributions $q^{(n)}(t),q^{(n)}(x|t)$ are intermediate distributions. The number of terms in the distribution $q^{(n)}(t|x^k)$ is upper bounded to $|\mathcal{T}|k^{|\mathcal{X}|-1}$ entries by using the histogram of $X^k$. This distribution must be updated every iteration, and computing the distributions $q^{(n+1)}(t)$ and $q^{(n+1)}(x|t)$ requires summing over $q^{(n)}(t|x^k)$, $p(x^n)$, and $p(x^k,x)$, the latter two of which are computed based on the histogram and have at most $|\mathcal{T}|\cdot k^{|\mathcal{X}| - 1}$, $k^{|\mathcal{X} - 1|}$ entries, respectively. Standard cardinality bounds would suggest that $|\mathcal{T}| = \min\{|\Theta| , k^{|\mathcal{X}| - 1}\}$ is sufficient to achieve $R^{k}_{\text{B}}(D)$. This bound on $\mathcal{T}$ is established along with a fact that $T$ depends on $X^k$ only through $\theta$. 

As a result, the problem structure allows one to reduce the complexity of the IB algorithm from exponential to polynomial when $|\mathcal{X}|$ is constant, albeit of a potentially large degree.
\begin{figure}[htb]
\centering
  \includegraphics[width=\columnwidth]{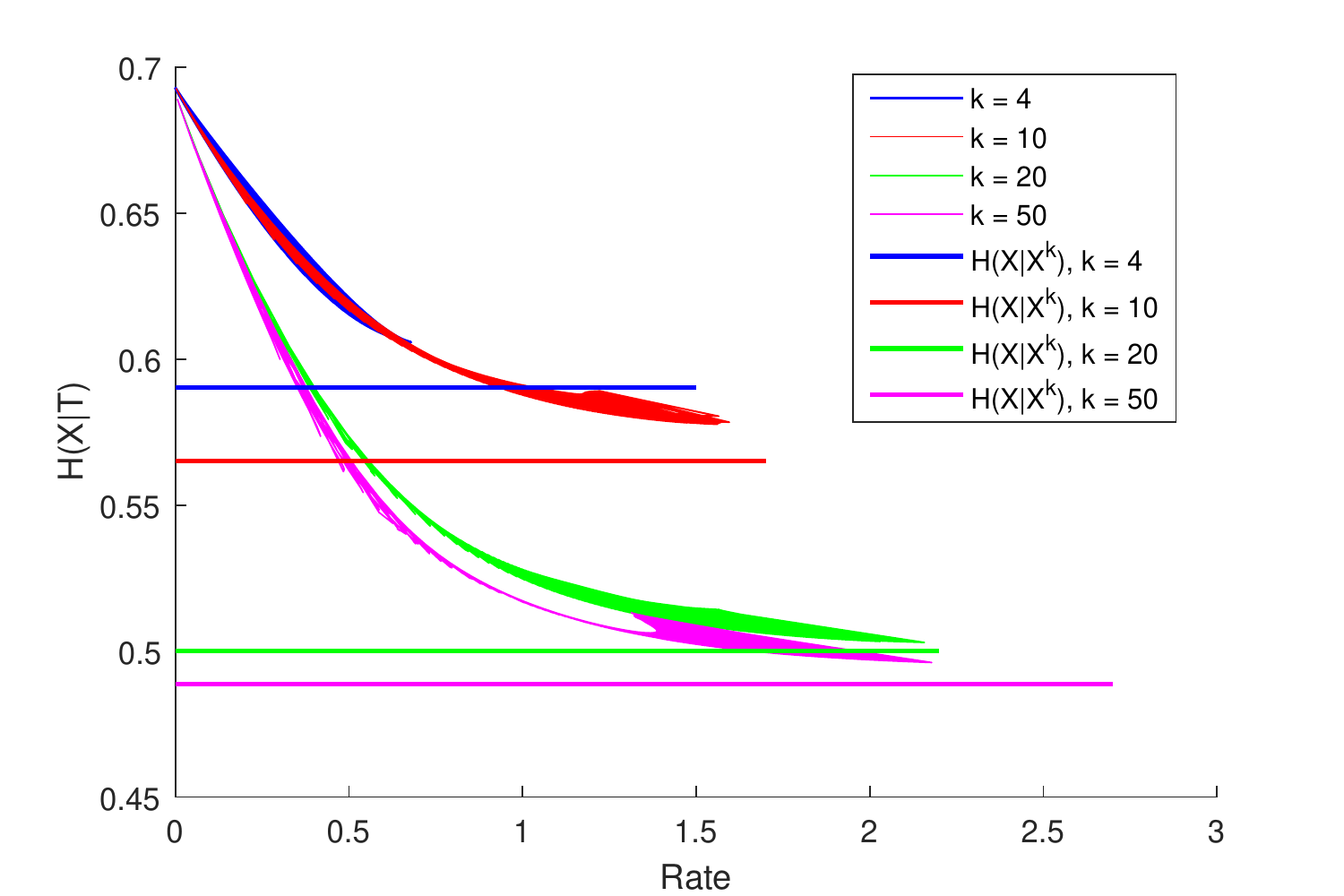}
\caption {Rate-distortion curve for Bernoulli distribution with uniform prior and $k\in\{4,10,20,50\}$.}
\label{fig:Rate-Discrete}
\end{figure}
In Figure~\ref{fig:Rate-Discrete} we plot the rate-distortion curve for a Bernoulli distribution with a uniform prior and $|\mathcal{T}|=k+1$ is sufficient for the Bernoulli distribution. As the number of samples increase we see that the cross entropy loss decreases; however, the cross entropy loss for each sample is bounded by $H(X|X^k)$.
\subsubsection*{Gaussian Source}
For $k$-sample $X^k$ with the Gaussian distribution, the sample mean $\overline{X}_k$ is the sufficeint statistic for $\theta$, and the Markov chain is $X - \theta - \overline{X}_k - T$. \cite{chechik2005information} propose an iterative algorithm to compute the compressed representation $T^{(n)}=A^{(n)}\overline{X}_k+Z^{(n)}$, where the projection matrix $A^{(n)}$ and covariance of noise $Z^{(n)}\sim \mathcal{N}(0,\Sigma_Z^{(n)})$ is computed as:
\begin{align*}
&\Sigma_Z^{(n+1)} = (\beta\Sigma_{t^{(n)}|x}-(\beta-1)\Sigma_{t^{(n)}})^{-1}\\
&A^{(n+1)}=\beta\Sigma_Z^{(n)}\Sigma_{t^{(n)}|x}^{-1}A^{(n)}(I-\Sigma_{x|\overline{X}_k}\Sigma_{\overline{X}_k}^{-1}),
\end{align*}
where $\Sigma_{t^{(n)}|x}$ and $\Sigma_{t^{(n)}}$ are covariance matrix which is computed based on $T^{(n)}$ in each iteration.
\begin{figure}[htb]
\centering
  \includegraphics[width=\columnwidth]{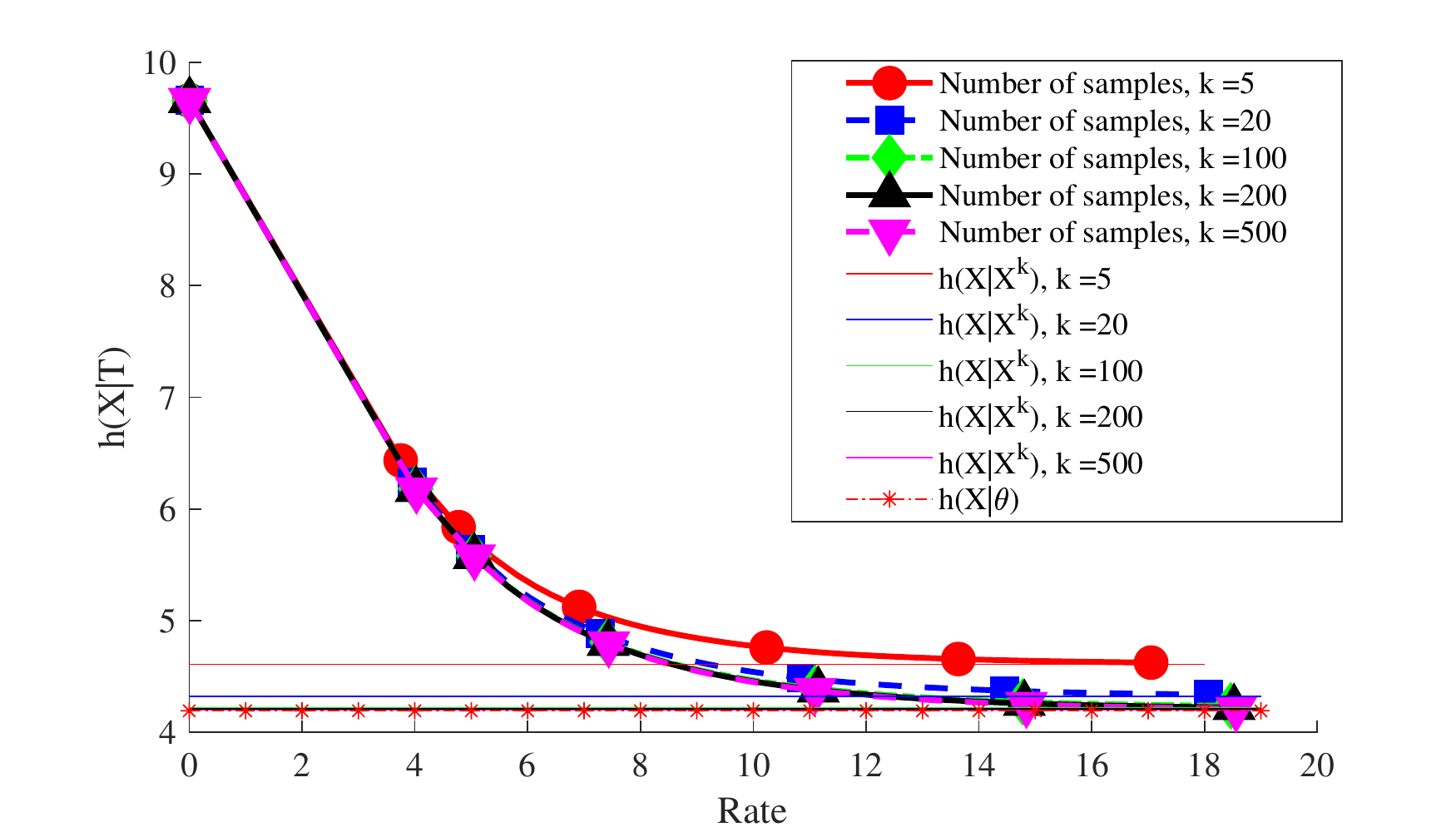}
\caption {Rate-distortion curve for jointly Gaussian distribution with $d=6$ and $k\in\{5,20,100,200,500\}$.}
\label{fig:Rate-Dist}
\end{figure}

In Figure~\ref{fig:Rate-Dist} we plot the rate-distortion curve for a Gaussian source, where $d=6$ and the covariances are drawn elementwise at random from the standard normal distribution and symmetrized. The curve is smooth because the Gaussian information bottleneck provides an exactly optimum solution.

Finally, we show the dependency of the rate on the number of samples $k$ for Gaussian sources. For the Gaussian sources, as $k \to \infty$ the gap between the distortion $h(X|X^k)$ given direct access to the training set and the best case distortion $h(X|\theta)$ goes to zero; equivalently, the mutual information gap $I(X;\theta) - I(X;X^k)$ goes to zero. Figure \ref{fig:Rate-sample} shows the gap between $h(X|T)$ and $h(X|X^k)$ for different rate functions. As it demonstrate for $R = \Omega(\log(k))$ the gap between $h(X|T)$ and $h(X|X^k)$ goes to zero after 6 samples, also this function for rate has the optimum decay of the gap between $h(X|T)$ and $h(X|\theta)$ among other functions for rate.
\begin{figure}[htb]
\centering
  \includegraphics[width=\columnwidth]{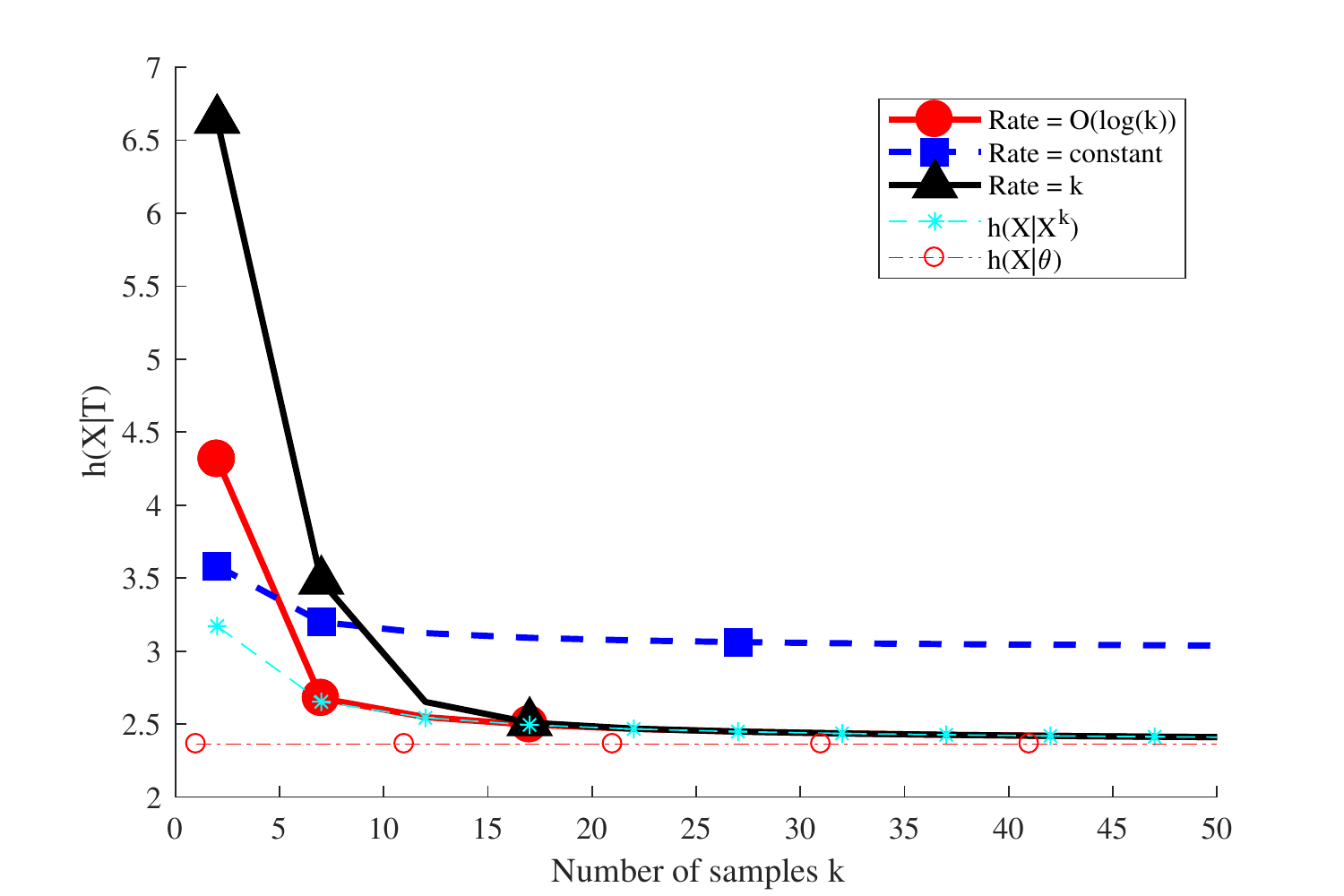}
\caption {Number of Samples-distortion curve for different rate functions against the $h(X|X^k)$ and $h(X|\theta)$.}
\label{fig:Rate-sample}
\end{figure}
\section{Sequential Data}\label{sec:sequential_results}
There are three approaches to solve the distortion-rate problem for  sequential data: in the {\em comprehensive solution} we pose a global problem and find the optimum solution by considering causality for a block data with a known length which is encoded sequentially. In the {\em online solution} for a streaming data,  which is a sequential data with unknown length, we find the optimum compression $T_k$ for each round and consider that the compression of other rounds $T^{-k}$ are constant. Finally, in the {\em two-pass solution}, we add a backward path to the online solution to improve the results; in this setup, similar to the comprehensive solution, we know the length of the block data, and we want to process this data one by one. In the following part, we study {\em fundamental limits} and {\em algorithmic method} for all these three solutions.  

\subsection{Fundamental Limits}
\subsubsection*{Comprehensive Solution}
Let's assume $(X,\theta)$ are jointly Gaussian and we observe a set of $k$ i.i.d. samples $X^k = (X_1,X_2,...,X_k)$ and $X$ is a hypothetical test point conditionally
independent of $X^k$ given $\theta$ (Figure \ref{fig:stream_model}). In sequential data similar to the batch data, sample mean $\overline{X}_k=\frac{1}{k}\sum_{i=1}^k x_i$ is a sufficient statistic for $\theta$, and the Markov chain is $X-\theta-\overline{X}_k-T^k$.
Then, the computation of $D^k_\text{SC}(R)$ is equivalent to the solution of the this problem

\begin{equation}\label{eq:loss_com}
\begin{split}
\max_{{p(t_1|x^1),\dots,p(t_l|x^l)}}\!\!\!\!\!\!\!\!\!\!\!\mathcal{L} = &\sum_{l=1}^k I(X;T^l)-\alpha_l I(\overline{X}_l;T_l|T^{-l})
\end{split}
\end{equation}
where $I(T^l;X)=h(X)-h(X|T^l)$ determines the average distortion of all rounds and $I(S_l;T_l|T^{-l})$ determines the rate in each round, $\alpha_l$ determines the trade-off between the average distortion and rate of each round, and as it is shown in \cite{chechik2005information} $T_l$ is a noisy linear projection of the source. The comprehensive solution finds the optimum compression for each round by solving the global problem since this problem is not a convex problem, finding a closed-form solution is highly unlikely. In addition, finding a numerical solution for a Gaussian distribution and even a discrete distribution is computationally expensive as dimension of data $d$ and the number of samples $k$ become large. This is because we need to compute $k(d\times d)$ elements of projection matrix $A$ for $k$ samples maximizing the (\ref{eq:loss_com}). 
 
\subsubsection*{Online Solution} 
In the online solution for a streaming data, computation of the $D_{SO}^k(R)$ equivalent to a conditional information bottleneck, which solves the problem
\begin{equation}\label{online_loss}
\min_{A_k}\mathcal{L}=I(\overline{X}_k;T_k|T^{-k})-\beta I(X;T_k|T^{-k}),
\end{equation}

where $T^{-k}:=\{T_1,T_2,\dots,T_{k-1}\}$. In this setup $\beta$ determines the trade-off between rate and relevant information and for $\beta \approx 1$, the rate of each round $I(T_k;\overline{X}_k|T^{-k})$ is small and the distortion $H(X|T^k)$ is close to $H(X|T^{-k})$, when $\beta \to \infty$ the rate becomes large and the distortion converges to $H(X|\overline{X}_k)$. In the following theorem, we characterize the optimum $A_k$:
\begin{theorem}\label{theorem:1}
The optimum projection $A_k$ that solves (\ref{online_loss}) for some $\beta$ satisfies
  \begin{equation}\label{projection}
    A_k=
    \begin{cases}
      [0,\dots,0] & 0<\beta<\beta_{c_1} \\
      [\alpha_1v_1^T,0,\dots] & \beta_{c_1}<\beta<\beta_{c_2}\\
             .            \\
             .            \\
             .            \\
      [\alpha_1v_1^T,\dots,\alpha_kv_k^T] & \beta_{c_{k-1}}<\beta<\beta_{c_{k}}
    \end{cases},
  \end{equation}
where $\lambda_1,\lambda_2, . . .$ are the ascending eigenvalues of
$M = \Sigma_{\overline{X}_k|T^{-k},X}\Sigma^{-1}_{\overline{X}_k|T^{-k}}$, $v_1, v_2, \dots$ are the associated left eigenvectors, and $\beta_{c}=\frac{1}{1-\lambda_i}$ are critical values for $\beta$ and $\alpha_i=\sqrt{\frac{\beta(1-\lambda_i)-1}{\lambda_i(v_i^T\Sigma_{\overline{X}_k|T^{-k}}v_i)}}$.
\end{theorem}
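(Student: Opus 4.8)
The plan is to reduce the conditional information bottleneck in (\ref{online_loss}) to an ordinary Gaussian information bottleneck (GIB) by conditioning on the already-transmitted features $T^{-k}$, and then to invoke the structural characterization of \cite{chechik2005information} quoted in Section~\ref{sect:batch_results}. The guiding observation is that, once $T^{-k}$ is treated as fixed, the only remaining degrees of freedom are the projection $A_k$ and the injected noise $Z_k$, so the round-$k$ problem is a GIB whose source and relevance variables are the \emph{conditional} Gaussians $\overline{X}_k \mid T^{-k}$ and $X \mid T^{-k}$.

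First I would argue that $(X,\overline{X}_k,T^{-k})$ is jointly Gaussian: $(X,\overline{X}_k)$ is Gaussian by hypothesis, and each earlier feature $T_j$ with $j<k$ is, inductively, a noisy linear image of $\overline{X}_j$ and hence a linear-Gaussian function of the data. Conditioning a Gaussian vector on another Gaussian vector again yields a Gaussian, so the law of $(X,\overline{X}_k)$ given $T^{-k}$ is Gaussian with conditional covariances $\Sigma_{\overline{X}_k|T^{-k}}$, $\Sigma_{\overline{X}_k|T^{-k},X}$, and the attendant cross terms. As in the GIB, the optimum $T_k$ is then a noisy linear projection $T_k=A_k\overline{X}_k+Z_k$ with $Z_k\sim\mathcal{N}(0,\Sigma_{Z_k})$ independent of everything else; taking the noise fresh and independent of $T^{-k}$ is precisely what makes $I(\overline{X}_k;T_k|T^{-k})$ and $I(X;T_k|T^{-k})$ depend on $(A_k,\Sigma_{Z_k})$ only through these conditional second-order statistics.

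Next I would write both conditional mutual informations as log-determinant expressions. Abbreviating $\Sigma_0:=\Sigma_{\overline{X}_k|T^{-k}}$ and $\Sigma_1:=\Sigma_{\overline{X}_k|T^{-k},X}$,
\begin{align*}
I(\overline{X}_k;T_k|T^{-k}) &= \tfrac12\log\frac{|A_k\Sigma_0 A_k^T+\Sigma_{Z_k}|}{|\Sigma_{Z_k}|},\\
I(X;T_k|T^{-k}) &= \tfrac12\log\frac{|A_k\Sigma_0 A_k^T+\Sigma_{Z_k}|}{|A_k\Sigma_1 A_k^T+\Sigma_{Z_k}|},
\end{align*}
so that (\ref{online_loss}) becomes \emph{identical} to the GIB Lagrangian with source covariance $\Sigma_0$ and noisy covariance $\Sigma_1$. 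Applying the GIB theorem of \cite{chechik2005information} verbatim to this conditional pair, the rows of the optimum $A_k$ are scaled left eigenvectors of $M=\Sigma_1\Sigma_0^{-1}=\Sigma_{\overline{X}_k|T^{-k},X}\Sigma^{-1}_{\overline{X}_k|T^{-k}}$; the $i$-th eigenvector activates exactly when $\beta$ crosses $\beta_{c_i}=1/(1-\lambda_i)$, which yields the staircase structure (\ref{projection}), and matching the stationarity condition in each eigendirection fixes the per-eigenvector gain at $\alpha_i=\sqrt{\frac{\beta(1-\lambda_i)-1}{\lambda_i (v_i^T\Sigma_{\overline{X}_k|T^{-k}}v_i)}}$.

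The main obstacle I anticipate is not the eigen-analysis, which is inherited wholesale from the GIB once the reduction is in place, but rigorously certifying that reduction: namely that the conditional-IB optimum is genuinely linear-Gaussian (so that restricting to maps of the form $A_k\overline{X}_k+Z_k$ loses nothing), and that the bookkeeping of the conditional covariances $\Sigma_0$ and $\Sigma_1$ is consistent with the fixed upstream encoders producing $T^{-k}$. A careful treatment should confirm that the history enters only through these two conditional covariances, after which the critical thresholds $\beta_{c_i}$ and gains $\alpha_i$ follow from the scalar optimization along each eigendirection exactly as in the unconditional case.
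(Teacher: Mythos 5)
Your proposal is correct and follows essentially the same route as the paper: the paper likewise observes that conditioning on $T^{-k}$ turns (\ref{online_loss}) into a Gaussian-IB Lagrangian in the conditional covariances $\Sigma_{\overline{X}_k|T^{-k}}$ and $\Sigma_{\overline{X}_k|T^{-k},X}$, writes the loss as log-determinants, and then carries out the stationarity/eigenvector computation of \cite{chechik2005information} explicitly (via $\frac{\delta\mathcal{L}}{\delta A_k}=0$ and the substitution $A_k=UV$) rather than citing the GIB theorem as a black box. The reduction you flag as needing care --- that the optimum is linear-Gaussian and that the history enters only through the two conditional covariances --- is assumed rather than proved in the paper as well, since (\ref{online_loss}) is already posed as a minimization over $A_k$.
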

\begin{proof}
We first rewrite the loss function in terms of the entropies
\begin{align*}
\min_{A_k}\mathcal{L}&=I(\overline{X}_k;T_k|T^{-k})-\beta I(X;T_k|T^{-k})\\ &=(1-\beta)h(T_k|T^{-k})-h(T_k|\overline{X}_k)+\beta h(T_k|T^{-k},X)\\
 &=\!(1\!-\!\beta)\log(\Sigma_{T_k|T^{-k}}\!)\!-\!\log(\Sigma_Z)\!+\!\beta\log(\Sigma_{T_k|X,T^{-k}}\!)\\
 &=\!(1\!-\!\beta)\!\log(A_k\!\Sigma_{\overline{X}_k|T^{-k}}\!A_k^T\!\!+\!\!I_d)\\
 &+\beta\log(\!A_k\Sigma_{\overline{X}_k\!|\!X,T^{-k}}A_k^T\!\!+\!\!I_d\!),
 \end{align*}
By taking derivative of the loss function with respect to the $A$
\begin{align*}
\frac{\delta\mathcal{L}}{\delta A_k}&=(1-\beta)(A_k\Sigma_{\overline{X}_k|T^{-k}}A_k^T+I_d)^{-1}2A_k\Sigma_{\overline{X}_k|T^{-k}}\\ &+ \beta(A_k\Sigma_{\overline{X}_k|X,T^{-k}}A_k^T+I_d)^{-1}2A_k\Sigma_{\overline{X}_k|X,T^{-k}}
\end{align*} 
In order to obtain minimum of the loss function $\mathcal{L}$, we set $\frac{\delta\mathcal{L}}{\delta A_k}=0$ then we have:
\begin{align}\label{eq:eigeneqution}
&\frac{\beta-1}{\beta}\left[(A_k\Sigma_{\overline{X}_k|T^{-k},X}A_k^T+I_d)(A_k\Sigma_{\overline{X}_k|T^{-k}}A_k^T+I_d)^{-1}\right]A_k\\\nonumber
&=A_k\left[\Sigma_{\overline{X}_k|X,T^{-k}}\Sigma_{\overline{X}_k|T^{-k}}^{-1}\right]
\end{align}
This equation is an eigenvalue problem and $A$ is the eigenvector of $\Sigma_{\overline{X}_k|X,T^{-k}}\Sigma_{\overline{X}_k|T^{-k}}^{-1}$. Then we can substitute $A_k=UV$ and $V\Sigma_{\overline{X}_k|X,T^{-k}}\Sigma_{\overline{X}_k|T^{-k}}^{-1}=LV$ similar to \cite{chechik2005information} and rewrite the (\ref{eq:eigeneqution}):
\begin{align*}
&\frac{\beta\!-\!1}{\beta}\left[(U\Sigma_{\overline{X}_k|T^{-k},X}U^T+I_d)(U\Sigma_{\overline{X}_k|T^{-k}}U^T+I_d)^{-1})\right]U\\
&=UL
\end{align*}
Considering $\Sigma_{\overline{X}_k|T^{-k}}=V^{-1}SV$ and $\Sigma_{\overline{X}_k|T^{-k},X}=V^{-1}SLV$ and multiplying by $U^{-1}$ from left and $U^{-1}(U\Sigma_{\overline{X}_k|T^{-k}}U^T+I_d)^{-1}$ by right we will have:
\begin{align}\label{eqn:Ucomputation}
UU^T=\left[\beta(I-L)-I\right](LS)^{-1}
\end{align}
Therefore, $A_k=UV$, in which $V$ is the eigenvector of the $\Sigma_{\overline{X}_k|T^{-k},X}\Sigma_{\overline{X}_k|T^{-k}}$ and $U$ is computed based on (\ref{eqn:Ucomputation}). 
\end{proof}
\subsubsection*{Two-pass Solution}
To improve the result of the online solution and make the results closer to the optimum solution, we introduce the two-pass version of the online solution that solves the problem from $(k-1)$-th round to the first round by solving following loss function 
\begin{align}\label{eq:recursive online}
\min_{A_{k-1}}\!\mathcal{L}\!=\!I(\overline{X}\!_{k-1};T\!_{k-1}\!|T^{-(k-1)}\!,T_k)\!-\!\beta I(X;T\!_{k-1}\!|\!T^{-(k-1)}\!,\!T_k\!)
\end{align} 
where $T^{-(k-1)}=\{T_1,T_2,\dots,T_{k-2}\}$. In this setup, similar to the online solution, $\beta$ shows the trade-off between rate and relevant information and for $\beta \approx 1$, the rate of each round $I(\overline{X}_{k-1};T_{k-1}|T^{-(k-1)})$ is small and the distortion $H(X|T^k)$ is close to $H(X|T^{-(k-1)},T_k)$, when $\beta \to \infty$ the rate becomes large and the distortion converges to $H(X|\overline{X}_k)$. Similar to Theorem \ref{theorem:1}, we characterize the optimum projection matrix from $(k-1)$-round to the previous round $(k-2)$-th round in the following theorem:
\begin{theorem}\label{theorem2}
The optimum projection matrix $A_{k-1}$ that solves (\ref{eq:recursive online}) 
   \begin{equation}
    A_{k-1}=
    \begin{cases}
      [0,\dots,0] & 0<\beta<\beta_{c_1} \\
      [\alpha_1v_1^T,0,\dots] & \beta_{c_1}<\beta<\beta_{c_2}\\
             .            \\
             .            \\
             .            \\
      [\alpha_1v_1^T,\dots,\alpha_kv_k^T] & \beta_{c_{k-1}}<\beta<\beta_{c_{k}}
    \end{cases},
  \end{equation}
where $\lambda_1,\lambda_2, . . .$ are the ascending eigenvalues of
$K = \Sigma_{\overline{X}_k|T^{-k},X}\Sigma^{-1}_{\overline{X}_k|T^{-k}}$, $v_1, v_2, \dots$ are the associated left eigenvectors, and $\beta_{c}=\frac{1}{1-\lambda_i}$ are critical values for $\beta$ and $\alpha_i=\sqrt{\frac{\beta(1-\lambda_i)-1}{\lambda_i(v_i^T\Sigma_{\overline{X}_k|T^{-k}}v_i)}}$.
\begin{proof}
The proof is similar to that of Theorem \ref{theorem:1} and to avoid repetition we do not write the proof. 
\end{proof}
\end{theorem}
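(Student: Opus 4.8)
The plan is to mirror the derivation of Theorem~\ref{theorem:1} step for step, the only structural change being that every conditional entropy now carries the future feature $T_k$ in its conditioning set alongside the past features $T^{-(k-1)}$. I would begin by writing the backward encoder as the noisy linear map $T_{k-1}=A_{k-1}\overline{X}_{k-1}+Z_{k-1}$ with $Z_{k-1}\sim\mathcal{N}(0,I_d)$, exactly as in the online case, and then expand the loss (\ref{eq:recursive online}) into conditional differential entropies. Since $T_{k-1}$ depends on its inputs only through $\overline{X}_{k-1}$ and the independent noise $Z_{k-1}$, we have $I(\overline{X}_{k-1};T_{k-1}|T^{-(k-1)},T_k)=h(T_{k-1}|T^{-(k-1)},T_k)-h(Z_{k-1})$ and $I(X;T_{k-1}|T^{-(k-1)},T_k)=h(T_{k-1}|T^{-(k-1)},T_k)-h(T_{k-1}|X,T^{-(k-1)},T_k)$, so that
\begin{equation*}
\mathcal{L}=(1-\beta)\,h(T_{k-1}|T^{-(k-1)},T_k)-h(Z_{k-1})+\beta\,h(T_{k-1}|X,T^{-(k-1)},T_k),
\end{equation*}
which is precisely the three-term form that drove the proof of Theorem~\ref{theorem:1}.

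The key observation is that, because the forward pass has fixed $T_k$ as a linear-Gaussian function of $\overline{X}_k$, the tuple $(X,\theta,\overline{X}_{k-1},T^{-(k-1)},T_k)$ remains jointly Gaussian. Each conditional entropy above therefore reduces to a log-determinant of a conditional covariance, giving
\begin{equation*}
\begin{split}
\mathcal{L}=&(1-\beta)\log|A_{k-1}\Sigma_{\overline{X}_{k-1}|T^{-(k-1)},T_k}A_{k-1}^T+I_d|\\
&+\beta\log|A_{k-1}\Sigma_{\overline{X}_{k-1}|X,T^{-(k-1)},T_k}A_{k-1}^T+I_d|+\text{const}.
\end{split}
\end{equation*}
From here I would differentiate with respect to $A_{k-1}$ and set the gradient to zero, obtaining the exact analog of (\ref{eq:eigeneqution}): a generalized eigenvalue equation in which the rows of $A_{k-1}$ must be left eigenvectors of $K=\Sigma_{\overline{X}_{k-1}|X,T^{-(k-1)},T_k}\Sigma^{-1}_{\overline{X}_{k-1}|T^{-(k-1)},T_k}$. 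The substitution $A_{k-1}=UV$, with $V$ diagonalizing $K$ into eigenvalues $\lambda_i$, decouples the equation and reproduces $UU^T=[\beta(I-L)-I](LS)^{-1}$ as in (\ref{eqn:Ucomputation}); requiring this matrix to be positive semidefinite coordinatewise then yields both the critical thresholds $\beta_{c_i}=1/(1-\lambda_i)$ and the scalings $\alpha_i$, and explains the staircase structure by which eigenvectors are activated one at a time as $\beta$ increases.

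The main obstacle, and the only genuinely new content relative to Theorem~\ref{theorem:1}, is verifying that augmenting the conditioning set with the future feature $T_k$ preserves joint Gaussianity and leaves $\Sigma_{\overline{X}_{k-1}|T^{-(k-1)},T_k}$ invertible, so that the log-determinant reduction and the matrix inverse appearing in $K$ are both well defined. Once this is checked, the differentiation, the eigenvalue reduction, and the thresholding argument transfer verbatim from the online case. I would also flag that the matrix $K$ should be expressed through $\overline{X}_{k-1}$ conditioned on $T^{-(k-1)}$ and $T_k$, rather than through $\overline{X}_k$ and $T^{-k}$ as the statement currently reads, since it is the backward encoder $A_{k-1}$ (not $A_k$) that is being optimized here.
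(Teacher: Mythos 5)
Your proposal follows exactly the route the paper intends: the paper's own proof of Theorem~\ref{theorem2} simply defers to Theorem~\ref{theorem:1}, and you carry out precisely that transfer---rewriting the loss in conditional entropies, reducing to log-determinants, differentiating to get the eigenvalue equation, and recovering the thresholds $\beta_{c_i}$ and scalings $\alpha_i$---with the only change being the augmented conditioning set $\{T^{-(k-1)},T_k\}$. Your observation that the matrix $K$ in the statement should be written in terms of $\overline{X}_{k-1}$ conditioned on $(T^{-(k-1)},T_k)$ rather than $\overline{X}_k$ and $T^{-k}$ is a correct catch of an inconsistency in the theorem as printed.
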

\subsection{Algorithm}
\subsubsection*{Comprehensive Solution}
Comprehensive solution finds the optimum solution for a block of data with a {\it known} length which is encoded sequentially by considering causality. Since we can not find a closed-form solution in the comprehensive case, we solve this optimization problem for $k=2$ using numerical optimization methods for scalar case. Figure \ref{fig:Compare_Stream} demonstrates the total distortion ($H(X|T_1)+H(X|T^2)$) versus the total rate ($I(\overline{X}_1;T_1)+I(\overline{X}_2;T^2|T_1)$), the rate-distortion curve, for both the comprehensive and the online solutions. Although the comprehensive solution converges to the smaller distortion for the same rate, the result of comprehensive solution is scattered. This is because we use the numerical optimization methods (fminunc function in Matlab) to solve (\ref{eq:loss_com}) and in some area this function finds the local minimum. Therefore, we plot the convex hull of the solution since we know the rate-distortion curve is convex.  
 \begin{figure}[htb]
\centering
  \includegraphics[width=\columnwidth]{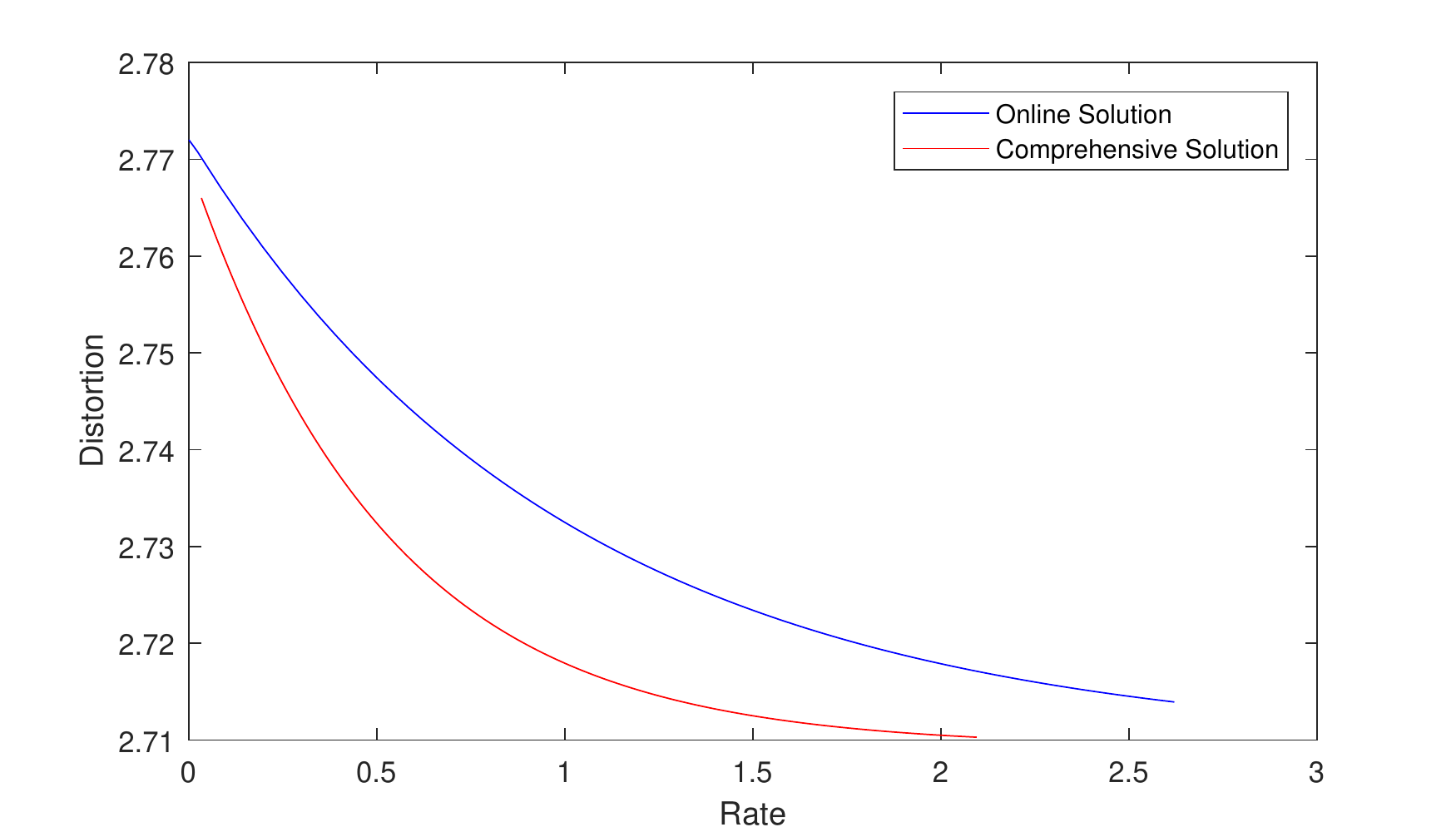}
\caption {Comparison between online and comprehensive solutions by rate-distortion curve, where $K=2$ and "x" axis is the total rates and "y" axis is the total distortion. }
\label{fig:Compare_Stream}
\end{figure}
\subsubsection*{Online Solution} Theorem \ref{theorem:1} states that the optimal projection matrix of each round in the online solution consists of eigenvalues of $M$, in order to compute the whole projection matrix $A$ for the streaming data $X^k$ we propose an iterative algorithm as follows:\par 
\begin{algorithm}[htb]
\caption{Online Solution for Streaming data}
\begin{algorithmic}[1]
\label{algorithm_1}
\STATE {Initiate the values of $\Sigma_{\theta}, \Sigma_n$ and $K$ number of rounds.}
\FOR{each round, $k=1:K$ \do }
\STATE {$\Sigma_{\overline{X}_k}\leftarrow\Sigma_n/n+\Sigma_{\theta}$.}
\STATE {$\Sigma_{T^{-k}}\leftarrow A\Sigma_{X^k}A^T+I_d$.}
\STATE {$\Sigma_C\leftarrow [(\Sigma_x\!+\!(k-1)\Sigma_{\theta})A_1,\dots,(\Sigma_x\!+\!(k-1)\Sigma_{\theta})A_k/k]$}
\STATE {$\Sigma_{\overline{X}_k|T^{-k}}\leftarrow\Sigma_{\overline{X}_k}-\Sigma_C\Sigma_{T^{-k}}^{-1}\Sigma_C^T$}
\STATE {$\Sigma_{\overline{X}_k|X,T^{-k}}\leftarrow\Sigma_{\overline{X}_k}-[\Sigma_{\theta},\Sigma_C]\Sigma_{X,T^{-k}}^{-1}[\Sigma_{\theta},\Sigma_C]^T$.}
\STATE  {Compute $M$ and eigenvalues $\lambda_1,\lambda_2,\dots$ and left eigenvectors $v_1,v_2,\dots$}
\STATE  {Compute the projection matrix of each round, $A_k$ based on \ref{projection}.}
\STATE {Compute the critical values of $\beta$ for each eigenvalue as $\beta_c=\frac{1}{1-\lambda}$, and initiate value for $\beta_{size}$.} 

\STATE {$A=diag(A_1,\dots,A_k)$}
\ENDFOR
\end{algorithmic}
\end{algorithm}
\noindent
where $\beta_{size}$ determines the rank of the projection matrix in each round and depends on the $\beta_c$. In this algorithm, first, for each round (from 1 to $K$) we find the matrix $M$ and its eigenvalues and eigenvectors. Then, we compute the projection matrix $A_k$ of this round according to $\beta_{size}$ and the eigenvalues and eigenvectors of the matrix $M$ and save it to calculate the $M$ for the next round.\par 
Similar to the batch data, from the sufficiency of the sample mean $\overline{X}_k$, we see that regardless of $k$, the optimum compression of each round $T_k$ is a $d$-dimensional representation of the  training set $X^k$; however, in our algorithm the global projection matrix $A$ is a $dk$-dimensional diagonal matrix operator, which we need it to compute the $\Sigma_{T^{-k.}}$. In general, in this problem one can derive the optimum operator from the $d$-dimensional matrix $M$ and compute the projection matrix $A_k$ in each round  and store it into the $A=diag(A_1,\dots,A_k)$. \par
 \begin{figure}[htb]
\centering
  \includegraphics[width=\columnwidth]{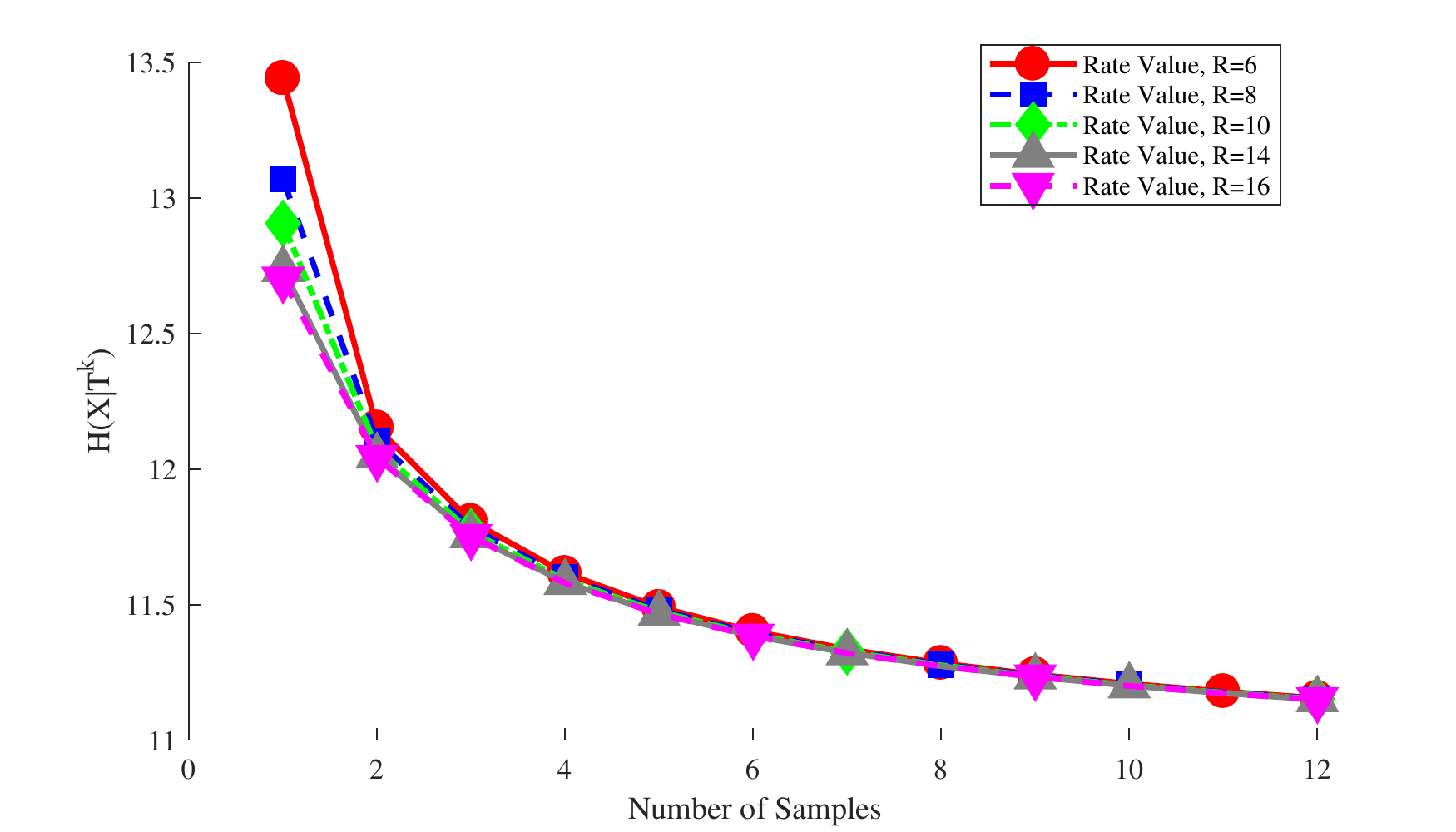}
\caption {Number of Samples-Distortion curve for jointly Gaussian distribution with $d=10$ and $R\in\{4,8,10,14,16\}$.}
\label{fig:Rate-RelStream}
\end{figure}
In Figure \ref{fig:Rate-RelStream} we show the sample-distortion curve for a jointly Gaussian distribution, where $h(X|\theta)$ and $k$ represent the distortion function and the number of rounds or the number of samples that we use, receptively. In addition, for a fixed rate, as the number of samples increase we see that the distortion decreases; however, for a sufficiently large number of samples, this reduction becomes negligible. This can be interpreted as the eigenvalue of the $M$ matrix tends to 1.
\subsubsection*{Two-pass Solution}
In theorem \ref{theorem2} we propose a two-pass solution, where we encode a block data with known length in each round in contrast to the online solution, which we encode the streaming data. Algorithm \ref{algorithm_2} computes the optimum projection of each round (assuming other rounds are constant) based on this chain $A_1-A_2-\dots-A_K$ and then updates the optimum projection of each iteration according to the backward chain $A_K-A_{K-1}-\dots-A_1$. We summarized this procedure in Algorithm \ref{algorithm_2}. 
\begin{algorithm}[htb]
\caption{Two-pass Algorithm}
\begin{algorithmic}[1]
\label{algorithm_2}
\STATE {Initiate the values of $\Sigma_{\theta}, \Sigma_n$, $K$ and $N$ covariance of the $\theta$ and noise, number of rounds and number of iteration that we need to run the two-pass algorithm, respectively.}
\FOR{each round, $n=1:N$ \do }
\FOR{each round, $k=1:K$ \do }
\STATE {$\Sigma_{\overline{X}_k}\leftarrow\Sigma_n/n+\Sigma_{\theta}$.}
\STATE {$\Sigma_{T^{-k}}\leftarrow A\Sigma_{X^k}A^T+I_d$.}
\STATE {$\Sigma_C\leftarrow [(\Sigma_x\!+\!(k-1)\Sigma_{\theta})A_1,\dots,(\Sigma_x\!+\!(k-1)\Sigma_{\theta})A_k/k]$}
\STATE {$\Sigma_{\overline{X}_k|T^{-k}}\leftarrow\Sigma_{\overline{X}_k}-\Sigma_C\Sigma_{T^{-k}}^{-1}\Sigma_C^T$}
\STATE {$\Sigma_{\overline{X}_k|X,T^{-k}}\leftarrow\Sigma_{\overline{X}_k}-[\Sigma_{\theta},\Sigma_C]\Sigma_{X,T^{-k}}^{-1}[\Sigma_{\theta},\Sigma_C]^T$.}
\STATE  {Compute the projection matrix of each round, $A_i$ based on \ref{projection}.}
\STATE {$A=diag(A_1,\dots,A_k)$}
\ENDFOR
\FOR{$k=K-1:1$ \do }
\STATE {$\Sigma_{T^{-k}}\leftarrow$ eliminate $k$-th row and column of the $\Sigma_{T^K}$.}

\STATE {$\Sigma_{\overline{X}_k|T^{-k},T_k}\leftarrow\Sigma_{\overline{X}_k}-\Sigma_C\Sigma_{T^{-k},T_k}^{-1}\Sigma_C^T$}
\STATE {$\Sigma_{\overline{X}_k|X,T^{-k},T_k}\!\!\leftarrow\Sigma_{\overline{X}_k}\!\!-\!\![\Sigma_{\theta},\Sigma_C]\Sigma_{X,T^{-k},T_k}^{-1}[\Sigma_{\theta},\Sigma_C]^T$.}

\STATE  {Compute the projection matrix of each round, $A_k$ based on \ref{projection}.}

\STATE {$A^{(n)}=diag(A_1,\dots,A_k)$}
\STATE {Update $\Sigma_{T^K}\leftarrow A^{(n)}\Sigma_{X^K}A^{(n)T}+I_{Kd}$}
\ENDFOR
\ENDFOR
\end{algorithmic}
\end{algorithm}
In each iteration of this algorithm, at the first part, we compute the covariance matrix $\Sigma_{T^K}$ and projection matrix $A$. In the second part of the algorithm, we compute $\Sigma_{T^{-(k)}}$ based on the $\Sigma_{T^K}$ and then compute the projection matrix of each round $A_k$ and update the whole projection matrix $A^{(n)}$, then update the $\Sigma_{T^K}$ and use it for computation of the next round of the data. At the end, we compute the projection matrix for the $n$-th iteration $A^{(n)}$.\par
 \begin{figure}[htb]
\centering
  \includegraphics[width=\columnwidth]{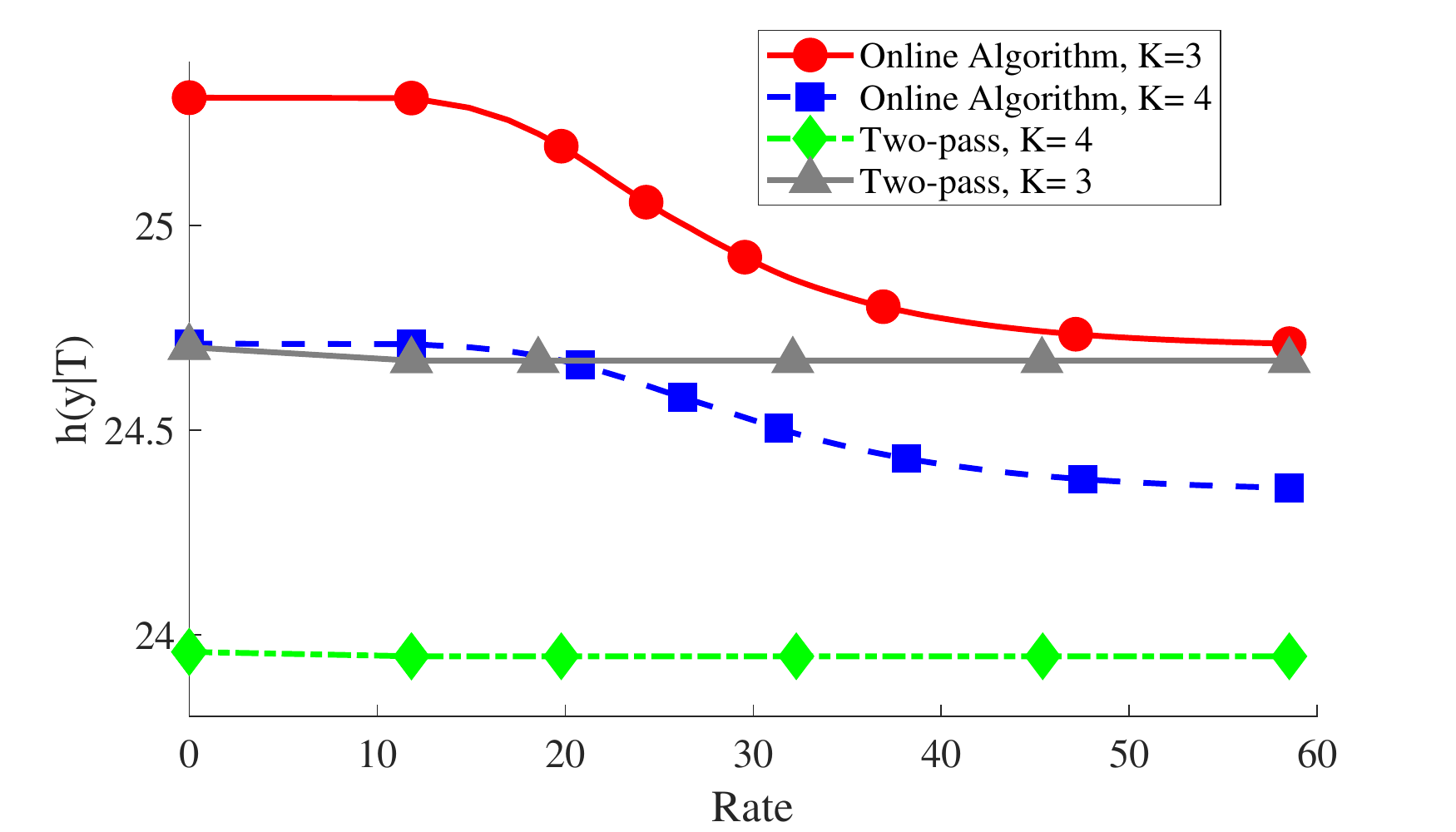}
\caption {Rate-Distortion curve for jointly Gaussian distribution with $d=10$ and $k=\{3,4\}$ for the online algorithm and $k=\{3,4\}$ for two-pass algorithm after one iteration.}
\label{fig:Rate-Dis-Recursive}
\end{figure}
We show the difference between distortion and its lower bound in terms of the rate, in Figure \ref{fig:Rate-Dis-Recursive}, where we show that the two-pass algorithm has the better performance in compare of the online algorithm.
The two-pass algorithm improves the results; however, based on the number of iteration that we want to run the algorithm it needs more computation and time and also it uses the sequential data with the known length, so it is not suitable for the streaming communication.\par
The rate-distortion curve, illustrated in figure \ref{fig:Rate-Dis-Com-Rcr}, compares the comprehensive and two-pass solutions. In this figure similar to the figure \ref{fig:Compare_Stream} we demonstrate the total distortion versus the total rate that we used. Since in the two-pass solution we add a return path to the online solution, this solution converges to the same value of distortion at the cost of higher rate . 

 \begin{figure}[htb]
\centering
  \includegraphics[width=\columnwidth]{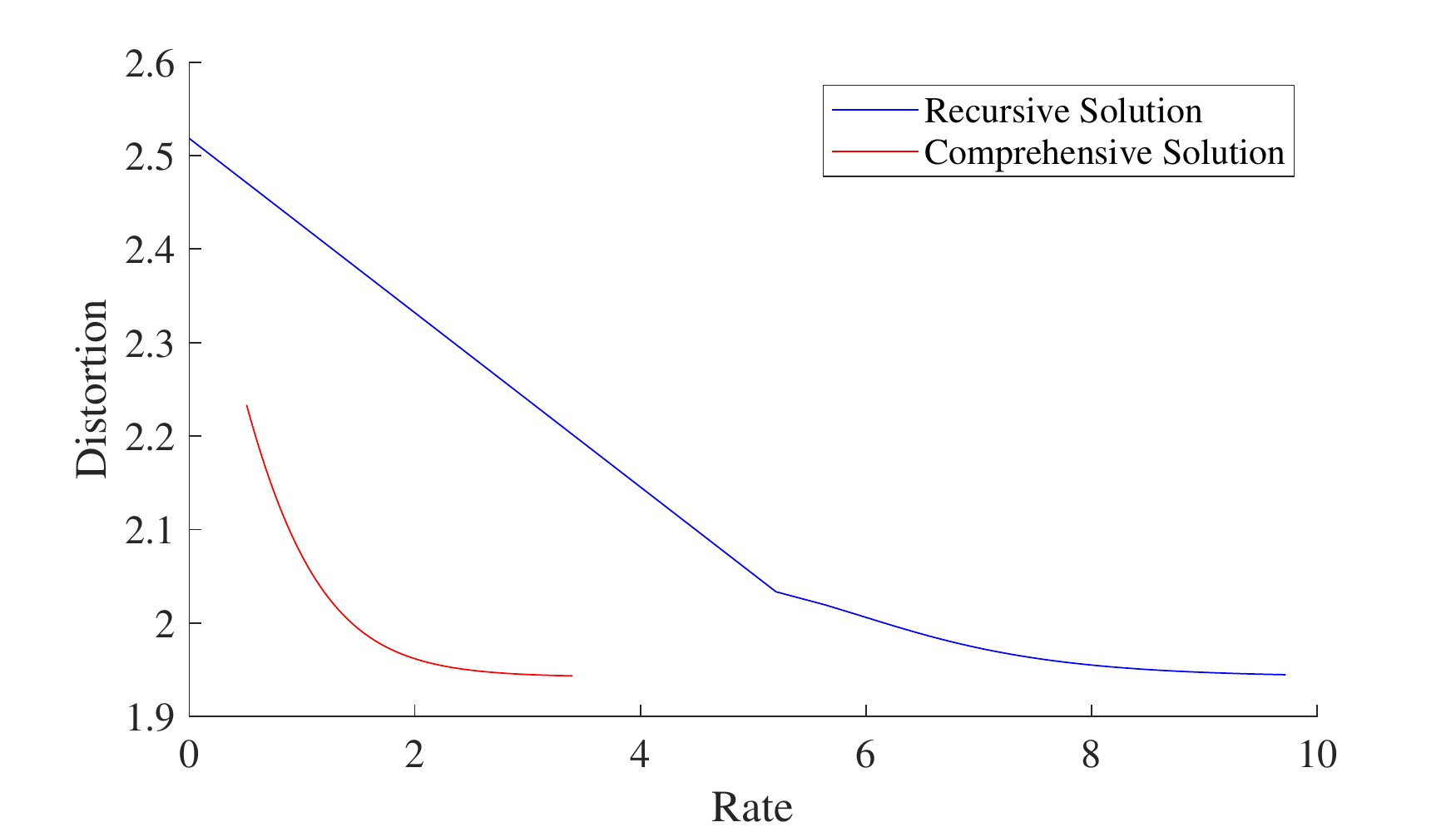}
\caption {Rate-distortion curve for both comprehensive and two-pass solutions, where $K=2$ and x axis is the total rate and y axis is the total distortion. .}
\label{fig:Rate-Dis-Com-Rcr}
\end{figure}

\section{Conclusion}\label{sect:conclusion}
We have examined distributed supervised learning from a {\em compressed} representation of a $k$-sample batch and streaming training set up to a cross-entropy loss, showing that solving for the distortion-rate function is equivalent to an appropriately modified information bottleneck problem.

 We derived a greedy method to solve the distortion-rate function for streaming data as well as an algortithm for this problem. Finally, we improved our results for the streaming data by a new two-pass algorithm. A variety of interesting problems remain to be investigated. The first is supervised learning for batch and streaming dataset. For unsupervised learning, linear compression is sufficient for Gaussian sources per \cite{chechik2005information}; establishing this result for the supervised case, and/or determining the best linear compression for general continuous sources, is a topic for futher study. The second is {\em interactive} compression of geographically separated training sets. For Gaussian unsupervised sources, results in \cite{vera2016collaborative,moraffah2015information} can be used to establish that a single round of interaction is sufficient for optimality; study of the general case is of interest.
 
The last is the study of the effects of single-shot coding on the rate-distortion function as a function of the data distribution and the number of training samples $k$.

\end{document}